\documentclass[a4paper]{llncs}

\usepackage{paper}

\title{Optimal Metric Search Is Equivalent to the
    Minimum Dominating Set Problem}

\author{Magnus Lie Hetland}
\institute{Norwegian University of Science and Technology,
\email{mlh@ntnu.no}%
}

\raggedbottom

\begin{document}
\maketitle

\begin{abstract}
    In metric search, worst-case analysis is of little value, as the search
    invariably degenerates to a linear scan for ill-behaved data.
    Consequently, much effort has been expended on more nuanced descriptions
    of what performance might in fact be attainable, including heuristic
    baselines like the AESA family, as well as statistical proxies such as
    intrinsic dimensionality. This paper gets to the heart of the matter with
    an exact characterization of the best performance actually achievable for
    any given data set and query. Specifically, linear-time
    objective-preserving reductions are established in both directions between
    optimal metric search and the minimum dominating set problem, whose greedy
    approximation becomes the equivalent of an oracle-based AESA, repeatedly
    selecting the pivot that eliminates the most of the remaining points. As
    an illustration, the AESA heuristic is adapted to downplay the role of
    previously eliminated points, yielding some modest performance
    improvements over the original, as well as its younger relative iAESA2.
    \keywords{Metric indexing \and Baselines \and Hardness \and Dominating set}
\end{abstract}

\section{Introduction}

Mapping out the complexity of a computational problem is generally a
two-pronged affair. On the one hand, there will be algorithms solving the
problem, whose performance is evaluated theoretically or empirically,
providing ever-tightening pessimistic bounds on what is possible. On the other
hand, there may be \emph{lower} bounds, based on reasonable
complexity-theoretical assumptions, as in the case of edit distance, for
example~\cite{Backurs:2015}, or on reasoning about the fundamentals of the
computational model, as in the case of sorting~\cite{Ford:1959}. The endgame
is when these bounds meet, showing some algorithm to be optimal.

Such bounds generally apply to the worst case, as the best-case performance
tends to be trivial. For metric search, however, both the best case and the
worst are quite uninformative. For a range query, one could always construct
an input where examining a single object is enough---\kern.6ptor one where
there is no escaping a full linear scan. The main thrust of research
attempting to describe what performance is possible has thus been directed
toward empirical baselines like the AESA
family~\cite{Vidal:1986,Figueroa:2010} and statistical hardness measures such
as intrinsic dimensionality~\cite{Chavez:2001},\footnote{Other measures
include the distance exponent~\cite{Traina:2000} and the ball-overlap
factor~\cite{Skopal:2007}.} or in some cases restricting the type of structure
studied, to permit a more nuanced analysis~\cite{Pestov:2012}.

It is, however, possible to describe \emph{exactly} what performance is
attainable for a given data set and query, as I show in what follows. The main
equivalence result, between metric search and dominating sets, provides just
such a description, i.e., the lowest number of distance computations that can
resolve the query. This performance will not, in general, be attainable
without some lucky guesses, but it \emph{is} attainable.
In addition, it is possible to give a bound on how close to this performance
a polytime algorithm may come in the worst case, under reasonable complexity
assumptions. The bound is tight for a sufficiently precise pivot selection
heuristic, i.e., one that is able to predict which point will eliminate the
most of the remainder, if used as a pivot.

In the AESA method, the index is a distance matrix, and search alternates
between heuristically selecting points close to the query and eliminating
remaining objects that are shown to be irrelevant.
The results in this paper are based on an idea developed by Ole
Edsberg,\footnote{Personal communication, July 2012} which involves computing
an \emph{elimination} matrix for a given query, with which one may implement
an ``oracle AESA,'' selecting pivots greedily based on elimination power,
rather than on similarity to the query object. I build on this idea,
establishing equivalence to the minimum dominating set problem.\footnote{Note
that the reductions are to and from two different versions of the dominating
set problem (the directed and undirected version, respectively). At the price
of slightly looser bounds, one could stick with just one of these.}
The main results and contributions of the paper are summarized in the
following.

\textit{Reduction to domination.} \Cref{sec:pivopt,sec:elimination}
establish a linear-time objective-preserving reduction from the problem of
resolving metric range queries (and certain $k$NN queries) with as few
distance computations as possible to that of finding minimum dominating sets
in directed graphs. This reduction applies to an \emph{offline} variant of
metric search, where all query--\kern.5pt object distances are already known.
It does, however, make it possible to compute the exact optimum
attainable
for the online version as well.
Some experimental results are provided as an illustration.

\textit{Reduction from domination.} \Cref{sec:hardness} describes a
reduction in the other direction,
from the dominating set problem in \emph{undirected} graphs to minimizing
distance computations,
establishing the hardness of metric search.
While it may in many cases still be feasible to determine the optimum using
efficient solvers of various kinds, this does mean that under reasonable
complexity-theoretical assumptions, no search method can, in general,
\emph{guarantee} attaining this optimum.

The reduction preserves the objective value, and for range search, the number
of data objects equals the number of vertices, which means that
inapproximability results for the dominating set problem carry over to metric
search, with approximation bounds for the former applying to the performance
of the latter, i.e., the number of distance computations. Thus, for range
search, one cannot even expect to get closer than within a $\log$-factor of
the optimum.

\textit{AESA and greedy approximation.} Because the objective is preserved
also in reducing \emph{to} domination, and the number of objects equals
the number of vertices, \emph{approximability} results also translate, meaning
that in principle the standard greedy selection strategy would yield the best
feasible metric range search algorithm (or very close to it), in terms of
distance computations in the worst case.\footnote{This is the worst case
\emph{given} that the optimal number of distance computations is some
value $\gamma$, not the more general, non-informative worst-case of
$\Omega(n)$.} As discussed in \cref{sec:aesa}, the greedy approach corresponds
to the AESA family of algorithms, given the right selection heuristic, i.e.,
one that accurately estimates the elimination power of a potential pivot,
among the remaining objects. An exact estimate here is, of course, not
possible without knowing the query--\kern.5pt pivot distances, but this
correspondence does demonstrate that, in the limit, AESA is, indeed, as good
as it gets. As an illustration, inspired by the greedy approximation,
\emph{greedy} AESA (gAESA) is proposed, taking into account which points
remain to be eliminated.

\section{Pivoting Is, of Course, Optimal}
\label{sec:pivopt}

A \emph{range search} using a metric $\delta$ over a set $X$ means finding all
points $x\in X$ within some search radius $r$ of a given query point $q$,
i.e., all points $x$ for which $\delta(q,x)\leq r$.
Given the distances between a query $q$ and a set $P$ of \emph{pivots}, the
distance $\delta(q,x)$ for any point $x$ is bounded as follows:
\begin{equation}
    \textstyle
    \max_{p\in P} | \delta(q,p) - \delta(p,x) |
    \leq
    \delta(q,x)
    \leq
    \min_{p\in P} \delta(q,p) + \delta(p,x)
    \label{eq:bounds}
\end{equation}
Leaving $q$ and $P$ implicit, we may refer to the lower and upper bounds as
$\ell(x)$ and $u(x)$, respectively.
If our search radius falls outside this range, there is no need to compute
$\delta(q,x)$; either the radius is small enough that we simply eliminate $x$
($r < \ell(x)$), or it is great enough that $x$ is ``eliminated'' by adding it
to the search result, sight unseen ($r \geq u(x)$).

This very direct approach
of using exact, stored distances $\delta(p,x)$, \emph{pivoting}, is the gold
standard for minimizing the number of distance computations needed. Other
approaches, which all involve coarsening the stored information in some way,
may reduce the computational resources needed to eliminate candidate objects,
but it should be obvious that they cannot require fewer distance computations.
As the following lemma shows, the lower and upper bounds are necessarily valid
values for $\delta(q,x)$, so if $\ell(x)\leq r \leq u(x)$, $x$ cannot safely
be eliminated.

\begin{lemma}
    \label{lem:stillmetric}
    Let $(X,\delta)$ be a metric space, with $X=\{p_1,\dots,p_m,q,z\}$, and
    let the distances $\delta_1,\delta_2:X\times X\to\mathds{R}_{\geq 0}$ be
    defined as follows:
    \[
    \delta_1(x,y)=
    \begin{cases}
        \textstyle
             \max_i |\delta(q,p_i)-\delta(p_i,z)| & \text{if
             $\{x,y\}=\{q,z\}$\rm;}
             \\
             \delta(x,y) & \text{otherwise\rm.}
    \end{cases}
    \]
    \[
    \delta_2(x,y)=
    \begin{cases}
        \textstyle
            \mathrlap{\min_i}\phantom{\max_i |}
            \delta(q,p_i)+\delta(p_i,z)\phantom{|}
             & \text{if
             $\{x,y\}=\{q,z\}$\rm;}
             \\
             \delta(x,y) & \text{otherwise\rm.}
    \end{cases}
    \]
    Then $\delta_1$ is a pseudometric and $\delta_2$ is a metric. If
    $\delta(q,p_i)\neq\delta(p_i,z)$ for some $i$, or if $q=z$, then
    $\delta_1$ is a metric.
\end{lemma}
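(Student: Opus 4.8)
The plan is to check the (pseudo)metric axioms for $\delta_1$ and $\delta_2$ in parallel, exploiting that each differs from $\delta$ only on the single unordered pair $\{q,z\}$. Nonnegativity and symmetry are immediate: the two replacement values $\max_i|\delta(q,p_i)-\delta(p_i,z)|$ and $\min_i(\delta(q,p_i)+\delta(p_i,z))$ are nonnegative and depend only on the unordered pair $\{q,z\}$, while every other value, copied verbatim from $\delta$, inherits these properties. Reflexivity is likewise inherited unless $q=z$, in which case $\{x,x\}=\{q,z\}$ forces $x=q=z$, and $\max_i|\delta(q,p_i)-\delta(p_i,q)|=0$ gives $\delta_1(q,q)=0$ as needed; for $\delta_2$ I shall take $q\neq z$, as in the intended application.

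The real content is the triangle inequality, and the observation that trims the work is that a triple $\{x,w,y\}$ can be affected only if it contains \emph{both} $q$ and $z$ --- otherwise all of its edges keep their $\delta$-values. So it suffices to examine, for each $j$, the triangle on $\{q,z,p_j\}$ (triangles with a repeated vertex reducing at once to reflexivity and nonnegativity). The facts I will use are that the replacement for $\delta_1$ is \emph{at most} $\delta(q,z)$, by the reverse triangle inequality $|\delta(q,p_i)-\delta(p_i,z)|\le\delta(q,z)$ for all $i$, and that the replacement for $\delta_2$ is \emph{at least} $\delta(q,z)$, since $\delta(q,z)\le\delta(q,p_i)+\delta(p_i,z)$ for all $i$. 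For $\delta_1$ there are three inequalities on $\{q,z,p_j\}$: the one bounding the $q$-$z$ edge by the path through $p_j$, namely $\delta_1(q,z)\le\delta(q,p_j)+\delta(p_j,z)$, holds because its left side only shrank relative to $\delta(q,z)$; the other two, e.g.\ $\delta(q,p_j)\le\delta_1(q,z)+\delta(p_j,z)$, are exactly what the definition of $\delta_1(q,z)$ as a maximum provides, since $\delta(q,p_j)-\delta(p_j,z)\le|\delta(q,p_j)-\delta(p_j,z)|\le\max_i|\delta(q,p_i)-\delta(p_i,z)|$. The argument for $\delta_2$ is the mirror image: the two inequalities using the $q$-$z$ edge as a summand hold because that edge only grew, and $\delta_2(q,z)\le\delta(q,p_j)+\delta(p_j,z)$ is precisely what the minimum in the definition delivers.

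It remains to say when these pseudometrics are metrics, i.e.\ when a zero distance forces equality; only the pair $\{q,z\}$ is at issue. For $\delta_2$ with $q\neq z$ the value $\min_i(\delta(q,p_i)+\delta(p_i,z))$ is strictly positive, since no $p_i$ can be at distance $0$ from both $q$ and $z$, so $\delta_2$ is a metric. For $\delta_1$, $\delta_1(q,z)=0$ holds precisely when $\delta(q,p_i)=\delta(p_i,z)$ for every $i$; hence if $q=z$ then $\delta_1(q,z)=0$ as required, and if instead $\delta(q,p_i)\neq\delta(p_i,z)$ for some $i$ then $\delta_1(q,z)>0$ --- so in either case $\delta_1$ is a metric, matching the stated hypotheses. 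I expect no genuine obstacle, only bookkeeping: one should be systematic with the triangle inequality --- enumerating the three roles the $q$-$z$ edge plays in $\{q,z,p_j\}$ and matching each either to ``this edge only shrank or grew'' or to ``this is what the max or min was built to give'' --- and should not lose track of the case $q=z$, which is exactly why it must be listed separately as a sufficient condition for $\delta_1$ to be a metric.
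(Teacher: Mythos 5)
Your proposal is correct and follows essentially the same route as the paper's proof: only triangles containing both $q$ and $z$ need checking, the directions where the modified edge appears on the ``favourable'' side follow from $\delta_1(q,z)\le\delta(q,z)\le\delta_2(q,z)$, and the remaining direction is exactly what the choice $i=j$ inside the max (resp.\ min) provides, with the metric/pseudometric distinction settled by inspecting when $\delta_1(q,z)=0$. Your explicit handling of the degenerate case $q=z$ (and setting it aside for $\delta_2$) is, if anything, slightly more careful than the paper's one-line treatment of reflexivity.
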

\begin{proof}
    We have $\delta_j(x,y)=\delta_j(y,x)$ and $\delta_j(x,x)=0$, for
    $x,y\in X, j\in\{1,2\}$. We also have $\delta_2(x,y)=0\implies x=y$, and
    if $\delta(q,p_i)\neq\delta(p_i,z)$ for some $i$, or if $q=0$, then
    $\delta_1(x,y)=0\implies x=y$.
    We have $\delta_1(q,z)\leq\delta(q,z)$, so triangularity
    can only be broken for $\delta_1$
    in the cases
    $\delta_1(q,p_k)\leq\delta_1(q,z)+\delta_1(z,p_k)$
    or
    $\delta_1(z,p_k)\leq\delta_1(z,q)+\delta_1(q,p_k)$, for some $k$. Consider
    the first of these.
    We maximize over $i$, so we need only show the following for \emph{some}
    choice of $i$:
    \begin{equation}
        \delta(q,p_k) \leq |\delta(q,p_i)-\delta(p_i,z)| + \delta(z,p_k)
    \end{equation}
    This is satisfied for $k=i$.
    The other case is handled symmetrically.
    For $\delta_2$, we have $\delta(q,z)\leq\delta_2(q,z)$, so
    triangularity can only be broken in $\delta_2(q,z)\leq
    \delta_2(q,p_k)+\delta_2(p_k,z)$. We minimize over $i$, so this need only
    hold for \emph{some} choice of $i$, and again we may choose $i=k$,
    producing an equation.
\end{proof}
\begin{corollary}
    \label{cor:range}
    No search method can resolve a metric range query with fewer distance
    computations than pivoting.
\end{corollary}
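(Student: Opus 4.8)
\textit{Proof proposal.}
The plan is to read \Cref{lem:stillmetric} as an indistinguishability statement and run an adversary argument. Model a search method as a procedure that has the stored pairwise distances among the data objects available for free and, on a query $q$ with radius $r$, computes query--object distances $\delta(q,x)$ one at a time, adaptively, until it can report for every data object $x$ whether $\delta(q,x)\leq r$; the cost is the number of such computations. When the method halts, let $P$ be the set of objects whose distance to $q$ it has computed. By \eqref{eq:bounds}, an object $x$ is already pinned down once $r<\ell(x)$ (forcing $\delta(q,x)>r$) or $r\geq u(x)$ (forcing $\delta(q,x)\leq r$), where $\ell,u$ are the bounds induced by $P$; pivoting is precisely the strategy that halts as soon as every object is pinned down this way, at cost $|P|$, and it is correct by \eqref{eq:bounds}. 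So it suffices to show that any correct method must, on every input, reach a set $P$ that already pins down all objects: its cost is then at least the minimum size of such a set, which is exactly what optimal pivoting achieves.

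The core step is the contrapositive. Suppose that after computing distances to $P$ some object $x$ is not pinned down, i.e.\ $\ell(x)\leq r<u(x)$. First, $x\notin P$: had $\delta(q,x)$ been computed, the term $p=x$ in the two bounds would give $\ell(x)\geq\delta(q,x)\geq u(x)$, hence $\ell(x)=u(x)$ by \eqref{eq:bounds}, contradicting $\ell(x)<u(x)$. Now apply \Cref{lem:stillmetric} with the roles of $p_1,\dots,p_m$ played by the objects in $P$ and with $z=x$: it produces two distance functions on $P\cup\{q,x\}$ that agree with $\delta$ on every pair except possibly $\{q,x\}$, one of them a pseudometric with $\delta_1(q,x)=\ell(x)$ and one a metric with $\delta_2(q,x)=u(x)$. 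Extend both to the full object set while keeping all object--object distances and the computed values $\delta(q,p)$, $p\in P$, fixed --- a routine metric-extension step, e.g.\ an infimal-convolution (McShane-type) extension of the distance-to-$q$ function. The method's entire run --- which queries it makes, which answers it gets, when it halts, what it outputs --- is then identical under $\delta$, $\delta_1$, and $\delta_2$, since it never queries $\delta(q,x)$ and nothing else has changed. But $\ell(x)\leq r<u(x)$ makes the correct membership of $x$ differ between the two inputs: $\delta_2(q,x)=u(x)>r$ excludes $x$ under $\delta_2$, while $\delta_1(q,x)=\ell(x)\leq r$ includes it under $\delta_1$. Hence the method errs on at least one of the two inputs, so a correct method cannot halt with an unpinned object. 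This yields the corollary, and in fact the stronger statement that the optimal cost over all search methods equals the minimum number of pivots needed to pin down every object via \eqref{eq:bounds}.

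I expect the part needing genuine care to be the boundary and extension bookkeeping rather than the adversary mechanics, which are immediate from \eqref{eq:bounds} and \Cref{lem:stillmetric}. Specifically: one must match the two fooling metrics to the true membership of $x$ so that they really straddle $r$ (both directions are supplied by ``$x$ not pinned down''); one must handle the degenerate case $\ell(x)=0$ with $q\neq x$, where \Cref{lem:stillmetric} only delivers a pseudometric --- either by admitting pseudometric inputs, for which range search is equally well posed, or by nudging $\delta_1(q,x)$ to a small $\varepsilon\in(0,r]$, which loosens only the triangle inequalities incident to $\{q,x\}$; and one must check that the extension of the fooling distances from $P\cup\{q,x\}$ to the whole data set is legitimate, which the standard extension construction guarantees.
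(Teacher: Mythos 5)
Your proposal is correct and follows essentially the same route as the paper: use \Cref{lem:stillmetric} to show the pivoting bounds are tight, then run an adversary/indistinguishability argument showing that a method halting with an ``unpinned'' object must err on one of the two fooling instances. You are in fact somewhat more explicit than the paper on two points it leaves implicit or handles differently --- the extension of the fooling distance from $P\cup\{q,x\}$ to the full data set (which does go through, exactly as you sketch), and the degenerate case $\ell(x)=0$, which the paper dispatches by letting the adversary take $q=z$ rather than by admitting pseudometrics or an $\varepsilon$-nudge.
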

\begin{proof}
    From \cref{lem:stillmetric}, we know that after a set of distance
    computations making pivots $p_1,\dots,p_m$ available, the pivoting bounds
    are \emph{tight}; if pivoting cannot eliminate an object, no method can
    safely do so.
    (Note that an adversary would be free to let $q=z$ in the case where
    $\delta(q,z)=\ell(z)=0$, ensuring that we are indeed dealing with a metric
    space.) And given that no method can eliminate more objects than pivoting
    for any distance count, no method can eliminate all the objects with a
    lower distance count than pivoting.
\end{proof}

In other words, any method using fewer distance computations than pivoting
could be made to fail by an adversary in charge of the data set. This argument
covers range queries, and it is not hard to translate it to the $k$NN case,
where the $k$ nearest neighbors of $q$ are sought, as long as the result set
is uniquely determined. A radius must then exist, separating the $k$ nearest
neighbors from the others, and the tightest possible upper bound on this
radius is the maximum of the $k$ lowest pivoting bounds we have. Pivoting must
then be able to eliminate all points outside this radius, or our adversary
might strike again. The following corollary covers the more general case.

\begin{corollary}
    \label{cor:knn}
    No search method can resolve a metric $k$NN query with fewer distance
    computations than pivoting.
\end{corollary}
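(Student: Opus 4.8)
The plan is to adapt the adversary argument behind \cref{cor:range}; the only new wrinkle is that ``resolving'' a $k$NN query is a statement about the \emph{order} of the query--object distances rather than about a single fixed threshold. As in the paragraph above, since the result set $R$ is unique, writing $d_k < d_{k+1}$ for the distances from $q$ to its $k$-th and $(k+1)$-th nearest neighbours, there is a genuine gap, and any $r$ with $d_k \le r < d_{k+1}$ separates $R$ from the rest. Assume for contradiction that some search method $M$ resolves the query using strictly fewer distance computations than (optimal) pivoting needs, and let $P=\{p_1,\dots,p_m\}$ collect the pivots its (possibly adaptive) computations make available; then pivoting restricted to $P$ cannot resolve the query either.

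The substance of the proof is to read off, from that failure, a single object $z$ whose membership in $R$ its bounds $\ell(z),u(z)$ leave open. To certify $R$, pivoting needs every member of $R$ to have an upper bound strictly below $d_{k+1}$ and every non-member to have a lower bound strictly above $d_k$ --- equivalently, the tightest provable upper bound on a separating radius, the $k$-th smallest of the $u(\cdot)$, must still eliminate every object outside $R$. If this fails, then either some non-pivot $z\in R$ has $u(z)\ge d_{k+1}$, or some non-pivot $z\notin R$ has $\ell(z)\le d_k$ --- pivots being excluded because their query distance is known exactly and, $R$ being unique, cannot lie on the boundary. I would then invoke \cref{lem:stillmetric}, whose $\delta_1$ and $\delta_2$ are precisely the metrics obtained by replacing $\delta(q,z)$ with its lower or upper pivoting bound while leaving every other distance --- in particular every distance $M$ ever computed --- untouched. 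Taking the ``upper'' completion in the first case and the ``lower'' one in the second yields a metric, consistent with everything $M$ saw, under which the $k$ nearest neighbours of $q$ are no longer $R$; since $\delta$ itself gives $R$, $M$ cannot have been correct under both, contradicting that it resolved the query.

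I expect the main obstacle to be exactly this reduction --- showing that the failure of pivoting is always witnessed by a \emph{single} object. Because moving one object's query distance also shifts the effective separating radius, the bookkeeping is fussier than in the range case, and one must really check the converse, that if every member and non-member of $R$ is individually pinned down relative to $d_k$ and $d_{k+1}$ then pivoting does certify $R$, in order to know that non-resolution genuinely leaves some such $z$ on the wrong side. The residual loose ends are dealt with as in \cref{cor:range}: when a perturbation lands exactly on $\ell(z)=d_k$ or $u(z)=d_{k+1}$ the resulting instance has a tie rather than a different unique result, and one either nudges strictly past the threshold where the bound interval permits it, or argues --- as with the $q=z$, $\ell(z)=0$ degeneracy there --- that a method confronted with two admissible completions it cannot distinguish has, by definition, not resolved the query.
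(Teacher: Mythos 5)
There is a genuine gap, and it sits exactly where you flagged it: the reduction to a \emph{single} witness object is false, and so is the claimed equivalence between your two formulations of the certification condition. The correct condition for pivoting (with pivot set $P$) to resolve a unique $k$NN query is the pairwise one the paper uses: $u(x)\leq\ell(y)$ for every $x$ in the result $R$ and every $y$ outside it. Your first formulation compares each object's bound to the \emph{true} thresholds $d_k,d_{k+1}$, which is the wrong invariant, since the adversary may choose any completion consistent with the observed distances, moving members and non-members at the same time. Concretely, take $k=1$, a non-pivot member $x$ with $u(x)=5$ and true $\delta(q,x)=d_k=2$, and a non-pivot non-member $y$ with $\ell(y)=3$ and true $\delta(q,y)=d_{k+1}=6$. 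Then $u(x)<d_{k+1}$ and $\ell(y)>d_k$, so neither disjunct of your dichotomy holds and no single-distance perturbation changes the result (raising $\delta(q,x)$ to $5$ still leaves $x$ nearest; lowering $\delta(q,y)$ to $3$ still leaves $y$ farther). Yet $u(x)>\ell(y)$, so pivoting has \emph{not} established $\delta(q,x)\leq\delta(q,y)$, and an adversary who moves \emph{both} distances (e.g., $\delta(q,x)=5$, $\delta(q,y)=3$) flips the answer. So non-resolution is not always witnessed by one object, and the converse check you propose ("if every object is individually pinned down relative to $d_k,d_{k+1}$ then pivoting certifies $R$") is exactly what fails.

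Because the adversary must in general perturb two query distances simultaneously, \cref{lem:stillmetric} --- which replaces only the single unknown distance $\delta(q,z)$ by its lower or upper bound --- cannot carry the argument on its own. The nontrivial content of the paper's proof of \cref{cor:knn} is precisely the joint-consistency check that your plan omits: after setting $\delta(q,x)=u(x)$, the triangle inequality through the known edge $\delta(x,y)$ may raise the admissible lower bound on $\delta(q,y)$, and one must show it cannot rise to $u(x)$ or above. The paper does this by taking the pivot $p$ realizing $u(x)$ and computing $\delta(x,y)-u(x)\leq\delta(p,y)-\delta(q,p)\leq\ell(y)$, so that $\delta(q,y)=\ell(y)$ (or, in the other case, any value strictly below $u(x)$) remains a valid choice, giving $\delta(q,y)<\delta(q,x)$. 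Repairing your proposal means abandoning the single-object framing, stating the certification condition pairwise in terms of the bounds alone, and doing this two-point consistency argument --- which is essentially reconstructing the paper's proof rather than routing it through \cref{lem:stillmetric}.
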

\begin{proof}
We need to establish $\delta(q,x)\leq\delta(q,y)$ for every $x$
in the result and every $y$ outside it. Assume that, given some pivot set $P$,
there is one such inequality that cannot be established by pivoting, i.e.,
$u(x)>\ell(y)$. An adversary could then ensure $\delta(q,x)>\delta(q,y)$, as
follows. First, let $\delta(q,x)=u(x)$.
The only effect on the valid range for $\delta(q,y)$ is found in the lower
bound $\delta(q,y)\geq|u(x)-\delta(x,y)|$. If $\delta(x,y)\leq
u(x)$, then the relevant lower bound is $u(x)-\delta(x,y)$, which is
\emph{strictly} less than $\delta(q,x)=u(x)$ (because $\delta(x,y)>0$, as
$x\neq y$), and so it is still possible to have $\delta(q,y)<\delta(q,x)$.

If, however, $\delta(x,y)>u(x)$, the relevant lower bound is
$\delta(x,y)-u(x)$. Let $p$ be the pivot that produced the pivoting bound
$u(x)$. We then have:
\begin{align*}
\delta(x,y) - u(x) &= \delta(x,y) - \bigl(\delta(q,p) + \delta(p,x)\bigr) \\
                   &= \bigl(\delta(x,y)-\delta(p,x)\bigr) - \delta(q,p)
                   \leq \delta(p,y)-\delta(q,p)
                   \leq \ell(y)
\end{align*}
In other words, $\delta(q,y)=\ell(y)$ is still a valid choice for the
adversary, yielding the desired $\delta(q,y)<\delta(q,x)$.
\end{proof}

\noindent
The upshot
is that the optimal distance count (for range and $k$NN queries)
can be found by considering only elimination \emph{using individual pivots}.

The range and $k$NN search modes are closely related, and yet there are cases
where they behave quite differently, as shown in \cref{fig:knnvsrange}.
\begin{figure}
\def\r{2pt}
\newcommand{\pt}[1]{circle (\r) node[below=\r] {#1}}
\hfill
\subfigure[Range wins ($k=2$)]{\label{subfig:rangewins}
\begin{tikzpicture}

    \draw[step=0.5,gray,very thin] (0,0) grid (3,3);

    \draw[fill] (0,0) \pt{$q$}
    (1,1) \pt{$p$}
    ;

    \foreach \i in { 1, ..., 5} {
        \draw[fill] (0.5+0.5*\i,3.5-0.5*\i) \pt{$x_\i$};
    }

\end{tikzpicture}
}
\hfill
\subfigure[$k$NN wins ($k=1$)]{\label{subfig:knnwins}
\begin{tikzpicture}

    \draw[step=0.5,gray,very thin] (0,0) grid (3,3);

    \draw[fill] (0,0) \pt{$q$}
    (3,3) \pt{$p$}
    ;

    \foreach \i in {1, ..., 5} {
        \draw[fill] (0.5+0.5*\i,3.5-0.5*\i) \pt{$x_\i$};
    }

\end{tikzpicture}%
}\hfill\mbox{}%
\caption{Differences between range search and $k$NN in the presence of ties
    for the $k$th position, using $(\mathds{R}^2,L_1)$. In both
    configurations, we have $r=8$. In (a), a range search need only compute
    $\delta(q,p)$, while $k$NN much also compute $\delta(q,x_i)$ for all but
    one of the $x_i$. In (b), the $k$NN search need only compute $\delta(q,p)$
    and $\delta(q,x_i)$ for one of the $x_i$, while a range search must
    compute all distances $\delta(q,\blank)$%
}\label{fig:knnvsrange}
\end{figure}

\noindent
It is, however, possible to establish some correspondence between the two,
when the $k$NN result set is uniquely determined.

\begin{lemma}
    If the $k$NN result is uniquely determined, the optimum number of distance
    computations for $k$NN is no worse than for a range search with the
    smallest possible $k$NN radius, even if the radius is unknown initially.
    Furthermore, there is a radius for which range queries and $k$NN will
    produce the same search result using the same number of distance
    computations.
\end{lemma}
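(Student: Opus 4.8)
The plan is to reduce the whole statement to one clean criterion for when a pivot set resolves a $k$NN query, and then to chase the pivoting bounds $\ell$ and $u$ in both directions. Write $N$ for the (unique) $k$-nearest set and $\rho=\max_{x\in N}\delta(q,x)$ for the smallest $k$NN radius; uniqueness means $\rho<\delta(q,y)$ for every $y\notin N$, so a range query of any radius in $[\rho,\min_{y\notin N}\delta(q,y))$ returns exactly $N$. (I assume there is at least one non-result point; otherwise the claim is trivial.) By \cref{cor:range,cor:knn} both optima in question are attained by pivoting, so it suffices to reason about pivot sets. The crucial first step is to show that a pivot set $P$, with its query distances revealed, resolves the $k$NN query \emph{exactly} when it certifies the \emph{strict} ordering $\delta(q,x)<\delta(q,y)$ for every $x\in N$ and $y\notin N$, i.e.\ when $u(x)<\ell(y)$ for all such pairs; abbreviating $\mu_P=\max_{x\in N}u(x)$ and $\lambda_P=\min_{y\notin N}\ell(y)$, this reads $\mu_P<\lambda_P$. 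I expect this to be the real obstacle, precisely because of the strictness: if $u(x)\ge\ell(y)$ then the adversary of \cref{cor:knn} can realize $\delta(q,x)=\delta(q,y)$, spoiling uniqueness, so $P$ does not resolve the query — and it is exactly the hypothesis that the result is uniquely determined that makes some $P$ (for instance all of $X$) satisfy $\mu_P<\lambda_P$ at all.

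For the first assertion I would take any pivot set $P$ that optimally resolves the range query of radius $\rho$. Resolving it forces every $x\in N$ into the result and every $y\notin N$ out, hence $u(x)\le\rho$ and $\ell(y)>\rho$ for all such points (with $u$ and $\ell$ collapsing to the exact distance whenever the point is itself a pivot), so $\mu_P\le\rho<\lambda_P$ and this same $P$ already resolves the $k$NN query. The argument works verbatim for any radius that returns $N$, so the $k$NN optimum is at most the range optimum at radius $\rho$; and this is so even though the $k$NN search never learns $\rho$, since the comparison is between sizes of resolving pivot sets and such a set transfers directly, whether or not the radius is known in advance.

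For the second assertion I would reverse the direction. Let $P^\star$ optimally resolve the $k$NN query, so $\mu_{P^\star}<\lambda_{P^\star}$, and pick any radius $r\in[\mu_{P^\star},\lambda_{P^\star})$ — an interval that is nonempty \emph{because} the $k$NN criterion is strict. Then $u(x)\le\mu_{P^\star}\le r<\lambda_{P^\star}\le\ell(y)$ for all $x\in N$, $y\notin N$, so $P^\star$ resolves the range query of radius $r$, and running the same inequalities with $\delta(q,\cdot)$ in place of the bounds shows its result is exactly $N$. Hence the range optimum at radius $r$ is at most $|P^\star|$, which by the first assertion makes it equal to the $k$NN optimum, with identical search results. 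Beyond the resolvability criterion of the first paragraph the remaining work is routine bound-chasing; the one loose end is the degenerate case with no non-result point, which I would simply exclude (there the $k$NN optimum is $0$).
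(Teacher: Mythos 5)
Your proposal is correct and follows essentially the same route as the paper's own proof: transfer an optimal pivot set for the range query at the smallest $k$NN radius to the $k$NN query, and conversely take an optimal $k$NN pivot set, whose pivoting bounds strictly separate the result from the remainder, and resolve a range query with any radius chosen between those bounds. The only real difference is presentational: you make explicit the resolution criterion $\mu_P<\lambda_P$ (and hence the nonemptiness of the radius interval), a point the paper covers only with the brief assertion that the optimal $k$NN bounds are ``strictly separating.''
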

\begin{proof}
When the $k$NN result is unique, there is a radius $r$ corresponding to the
$k$ resulting objects. Resolving a range query with radius $r$ must
necessarily yield upper bounds of at most $r$ for returned objects and lower
bounds greater than $r$ for the remainder. These same bounds can also be used
to separate the $k$ nearest objects from the remainder, \emph{without a
specified radius}, so $k$NN cannot require more distance computations.

Conversely, consider a $k$NN query. By \cref{cor:knn}, no method
requires fewer distance computations than pivoting, so in the optimal case we
will have actual distance bounds available, strictly separating the $k$
nearest from the remainder. Any range query with a search radius falling
between the upper and lower bounds can then be resolved with the same number
of distance computations.
\end{proof}

\noindent
It \emph{is} possible to increase the radius such that a range query would
require additional distance computations, while still just returning $k$
objects (cf.~\cref{fig:largerradius}).

\section{Elimination as Domination}
\label{sec:elimination}

Given a (directed) graph $G=(V,E)$, a vertex $u$ is said to \emph{dominate}
another vertex $v$ if the graph has an edge from $u$ to $v$. The
(\emph{directed}) \emph{minimum dominating set problem} involves finding a set
$D\subseteq V$ of minimum cardinality, such that every vertex $v\in V\setminus
D$ is dominated by some vertex $u\in D$. We call $\gamma(G)=|D|$ the
(\emph{directed}) \emph{domination number} of $G$.

For a given range query, computing the distance to a point may eliminate one
or more other points. There are no interactions between such eliminations
(see \cref{sec:pivopt}),
so an exhaustive listing of the potential eliminations gives us all the
relevant information needed to determine which points to examine and which to
eliminate. This corresponds to a directed graph---the \emph{elimination
graph}---whose minimum dominating set is the smallest pivot set, and thus the
minimum number of distance computations, needed to resolve the query
(cf.~\cref{fig:elimgraph}).
\begin{proposition}
    There is a linear-time reduction from the metric range search problem to
    the directed minimum dominating set problem, which preserves the objective
    values of the solutions exactly.
    \qed
\end{proposition}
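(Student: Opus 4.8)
The plan is to construct the \emph{elimination graph} directly from the stored distances, the query--object distances, and the search radius, and then to show that its directed dominating sets are precisely the pivot sets that resolve the query. First I would take the vertex set to be the data points, say $V = X\setminus\{q\}$; the case $q\in X$ is degenerate, since $\delta(q,q)=0\le r$ places $q$ in the result at no cost, so it may simply be dropped. For two distinct points $p,v\in V$, write $\ell_p(v)=|\delta(q,p)-\delta(p,v)|$ and $u_p(v)=\delta(q,p)+\delta(p,v)$ for the bounds that examining $p$ alone would give for $\delta(q,v)$, and put a directed edge from $p$ to $v$ exactly when $\ell_p(v)>r$ (so $v$ is eliminated from the result) or $u_p(v)\le r$ (so $v$ is added to the result unseen) --- i.e., exactly when examining $p$ resolves $v$. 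Each of the $O(|V|^2)$ candidate edges is decided in constant time from the input matrix, so the graph is built in time linear in the size of that matrix, which is the reduction's input; this gives the linear-time claim.

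Next I would prove the key equivalence: a set $P\subseteq V$ resolves the query if and only if $P$ is a directed dominating set of the elimination graph. For the forward direction, suppose $P$ resolves every point and fix $v\notin P$; then $\ell(v)>r$ or $u(v)\le r$, where by \eqref{eq:bounds} we have $\ell(v)=\max_{p\in P}\ell_p(v)$ and $u(v)=\min_{p\in P}u_p(v)$. Since the first is a maximum over $P$ and the second a minimum over $P$, the witnessing strict inequality is attained at a \emph{single} $p\in P$, which therefore sends an edge to $v$, so $v$ is dominated. For the converse, if $P$ is a directed dominating set, then every $v\in P$ is resolved by being examined, while every $v\notin P$ receives an edge from some $p\in P$, and that single-pivot condition already forces $\ell(v)\ge\ell_p(v)>r$ or $u(v)\le u_p(v)\le r$. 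Mapping a pivot set to itself is thus a cardinality-preserving bijection between the valid pivot sets and the directed dominating sets of $G$, so the minimum number of distance computations equals the directed domination number $\gamma(G)$, and the objective is preserved exactly.

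The step that carries the real weight is the ``no interactions between eliminations'' phenomenon behind the forward direction: that resolving $v$ using a whole pivot set is never strictly more powerful than resolving it with a suitably chosen single pivot from that set. This is exactly what the $\max$/$\min$ shape of the bounds in \eqref{eq:bounds} delivers, and it is also where \cref{cor:range} (via \cref{lem:stillmetric}) does its work --- it is what makes ``resolvable by $P$'' a well-defined, method-independent notion in the first place, by certifying that the pivoting bounds cannot be improved upon by any search method. Everything else --- confirming that the construction really is linear in the chosen encoding of the distance matrix, and disposing of the $q\in X$ corner case --- is routine.
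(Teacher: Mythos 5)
Your construction and argument coincide with the paper's: you build the same elimination graph (edges $p\to v$ whenever the single-pivot bounds resolve $v$), use the $\max/\min$ form of \eqref{eq:bounds} together with \cref{cor:range} to justify that resolvability by a pivot set reduces to single-pivot domination (the paper's ``no interactions between eliminations''), and observe that the $O(n^2)$ edge tests are linear in the size of the offline input (the distance matrix plus query distances). Your write-up is simply a more explicit version of the paper's informal justification, including the cardinality-preserving identification of pivot sets with dominating sets.
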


If the result of a $k$NN query is uniquely determined, and we ignore
elimination based on upper bounds (usually done in practice), the number of
distance computations correspond to a range query with the smallest $k$NN
radius.\footnote{Optimal $k$NN \emph{with} upper bounds does not map as
cleanly to dominating sets.}
\begin{figure}
\begin{floatrow}
\ffigbox{%
\def\r{2pt}
\begin{tikzpicture}
\begin{scope}
\clip (-.5\linewidth + 1.25cm + .5\pgflinewidth,-1.75)
    rectangle (.5\linewidth + 1.25cm - .5\pgflinewidth,1.75);
\fill[black!5] (-10cm,-10cm) rectangle (10cm,10cm);

\draw[black!20, fill=white] (0,0) circle[radius=2.5];
\draw[black!20] (0,0) circle[radius=1.5];

\draw (-35:1.5) node[fill=white] {$r_1$};

\begin{scope}
    \draw (-25:2.5) node[fill=black!5] {$r_2$};
\end{scope}

\begin{scope}
    \clip (0,0) circle[radius=2.5];
    \draw (-25:2.5) node[fill=white] {$r_2$};
\end{scope}

\draw
    (0,0) node[querynode] (q) {}
    (1,0) node[plainnode] (o1) {}
    +(2,.5) node[plainnode] (p) {}
    +(2,-.5) node[plainnode] (o2) {}
    ;

\draw
    (q.south) node[below, font=\footnotesize] {$q$}
    (o1.south) node[below, font=\footnotesize] {$x_1$}
    (o2.south) node[below, font=\footnotesize] {$x_2$}
    (p.south) node[below, font=\footnotesize] {$p$}
    ;

\draw[densely dashdotted] (p) circle[radius=1];

\end{scope}
\draw (current bounding box.south west) rectangle (current bounding box.north east);

\end{tikzpicture}%
}{%
\caption{The nearest neighbor can be determined by examining $x_1$ and $p$, as
we then have $u(x_1)<\ell(x_2)$. Range search with $r_1$ can be resolved
similarly, but using $r_2$ requires three distance computations, while still
returning the single nearest neighbor}%
\label{fig:largerradius}%
}%
\ffigbox{%
\begin{tikzpicture}
\begin{scope}
\clip (-.5\linewidth + .5\pgflinewidth,-1.75)
    rectangle +(\linewidth - \pgflinewidth, 3.5);

\fill[black!5] (-10cm,-10cm) rectangle (10cm,10cm);

\node{%
\begin{tikzpicture}[scale=.47]

    \draw (3.732886157216435 cm, 0.3549838538817218 cm) node[plainnode] (1) {};

    \draw (7.939504727723011 cm, 0.5322674524966186 cm) node[plainnode] (2) {};

    \draw (1.1588123954315703 cm, 1.622988764034955 cm) node[plainnode] (3) {};

    \draw (9.910698259877051 cm, 2.4882154092025037 cm) node[plainnode] (4) {};

    \draw (2.5018709565510746 cm, 4.174377590352696 cm) node[plainnode] (5) {};

    \draw (6.539896044904356 cm, 3.7207041770509033 cm) node[plainnode] (6) {};

    \draw (2.241320885523173 cm, 6.288158229306186 cm) node[plainnode] (7) {};

    \draw (5.866025777076387 cm, 6.393193053894219 cm) node[plainnode] (8) {};

    \draw (8.137267087234004 cm, 6.7759220129894855 cm) node[plainnode] (10) {};

    \draw (9.781976957678093 cm, 4.8523612977115445 cm) node[querynode] (q) {};
    \draw (q.south) node[below, font=\footnotesize] {$q$};

    \begin{pgfonlayer}{background}

    \draw[fill=white, draw=black!20, overlay]%
        (q.center) circle[radius=5.127083089556588];

    \end{pgfonlayer}

    \draw (1) edge[shorten <=2pt, shorten >=2pt, stealth'-] (3);
    \draw (1) edge[shorten <=2pt, shorten >=2pt, stealth'-] (10);
    \draw (3) edge[shorten <=2pt, shorten >=2pt, stealth'-] (4);
    \draw (3) edge[shorten <=2pt, shorten >=2pt, -stealth'] (5);
    \draw (3) edge[shorten <=2pt, shorten >=2pt, stealth'-] (10);
    \draw (4) edge[shorten <=2pt, shorten >=2pt, -stealth'] (5);
    \draw (4) edge[shorten <=2pt, shorten >=2pt, -stealth'] (7);
    \draw (5) edge[shorten <=2pt, shorten >=2pt, stealth'-stealth'] (7);
    \draw (8) edge[shorten <=2pt, shorten >=2pt, stealth'-] (10);

\end{tikzpicture}%
};
\end{scope}
\draw (current bounding box.south west) rectangle (current bounding box.north east);
\end{tikzpicture}
}{%
\caption{The directed elimination graph $G$ resulting from a specific range
query, with the domination number $\gamma(G)=5$ corresponding to the minimum
number of distance computations needed to separate relevant objects from
irrelevant ones}%
\label{fig:elimgraph}%
}%
\end{floatrow}
\end{figure}
Of course, finding a minimum dominating set is
NP-hard,\footnote{The undirected version is most commonly discussed, with a
reduction, e.g., from set covering~\cite[Th.\,A.1]{Kann:1992}. A similar
reduction to the directed version is straightforward.} and given the rather
unusual clash between large-scale information retrieval and combinatorial
optimization, we may quickly end up with overwhelming instance sizes. Still,
with a suitable mixed-integer programming solver, for example, the
optimization may very well be feasible in many practical cases. As an example,
\cref{fig:experiments} shows some computations made using the Gurobi
solver~\cite{Gurobi:2020}.
Many of these optima were found rather quickly, as presumably the structure of
the elimination graph was amenable to the solution methods of the solver.
Others, such as those for the DNA data set, took several days to compute.
And even for some of the easier cases, there were outliers. For example, for
the 2NN radius in 15-dimensional Euclidean space, all of the 10 randomly
selected queries led to computations lasting 10\kern1pt--200 seconds, except
for one, which took almost twenty hours. As with many such cases, however,
being satisfied with a solution that is a couple of percentage points shy of
perfect could drastically cut down on the computation time (i.e., by setting
the absolute or relative MIP gap), as illustrated in \cref{fig:gapplot}.

\Cref{fig:experiments} also includes results for several other methods, beyond
the optimum. These are all versions of the AESA approach~\cite{Vidal:1986},
as discussed in more depth in \cref{sec:aesa}. At the opposite end of the
spectrum of the optimum, there's the incremental random selection of pivots.
Separating the feasible from the infeasible, is an \emph{oracle} AESA, which
has access to the elimination power of each potential pivot, i.e., how many of
the remaining objects will be eliminated if a given pivot is selected. In the
feasible region we find AESA, iAESA2~\cite{Figueroa:2010}, and the new gAESA,
which is explained in \cref{sec:aesa}.

It is worth noting that $\gamma(G)$ is a more precise lower bound than an
ordinary \emph{best-case} analysis, which only takes input size into account,
and which is therefore always~1. Rather, this is the lowest possible number of
distance computations needed \emph{for a given dataset and query}. In order to
\emph{guarantee} using at most $\gamma(G)$ distance computations, you would
need to somehow determine $G$, which is quite unrealistic. And, as the next
section shows, it is also far from enough.

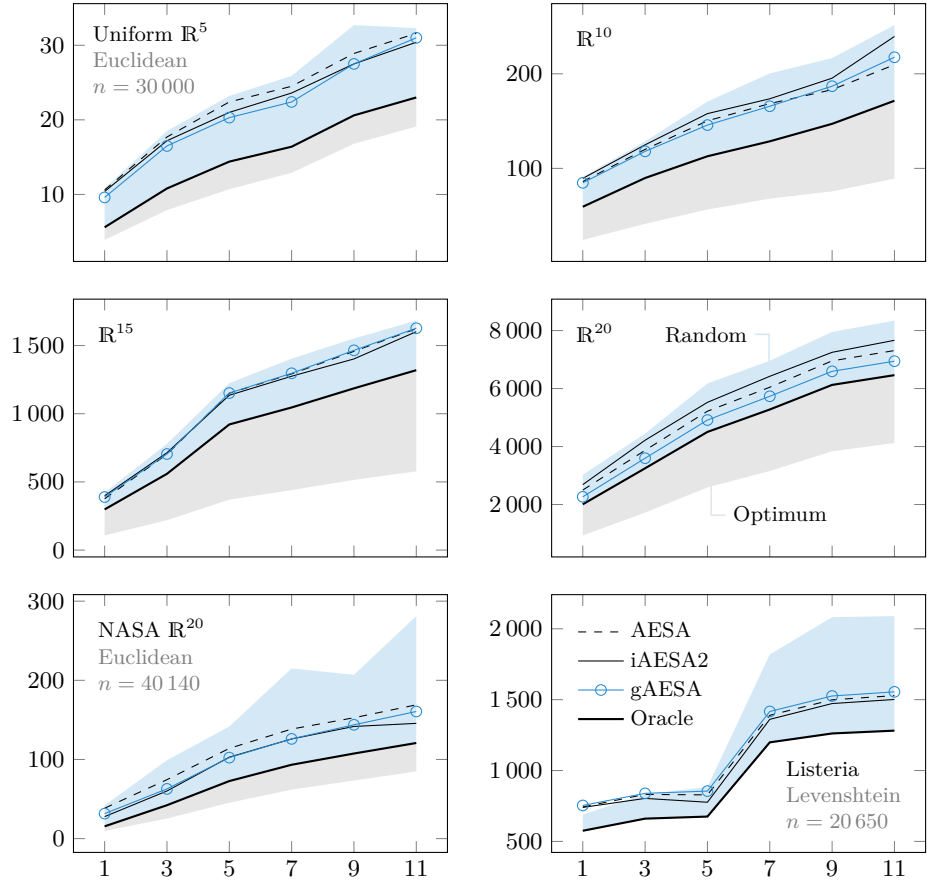
\begin{figure}%
\hfill
\begin{tikzpicture}
\begin{groupplot}[
    group style={
        group name=performance,
        group size=2 by 3,
        vertical sep=0.5cm,
        horizontal sep=1.37cm,
        x descriptions at=edge bottom,
    },
    small,
    xtick={1,3,5,7,9,11},
    height=5cm,
    width=6.5cm
    ]

\mynextgroupplot{uniform_d5_euclidean}
\legend{}

\draw (rel axis cs:0,1) node[inner sep=0pt, below right=7pt] {%
    \scalebox{.9}{\begin{varwidth}{6cm}%
    Uniform $\mathds{R}^{5}$\\%
    \color{black!50}
    Euclidean\\
    $n=\num{30000}$
    \end{varwidth}}%
};

\mynextgroupplot{uniform_d10_euclidean}
\legend{}

\draw (rel axis cs:0,1) node[inner sep=0pt, below right=9pt] {%
    \scalebox{.9}{\begin{varwidth}{6cm}%
    $\mathds{R}^{10}$%
    \end{varwidth}}%
};
\mynextgroupplot{uniform_d15_euclidean}
\legend{}

\draw (rel axis cs:0,1) node[inner sep=0pt, below right=9pt] {%
    \scalebox{.9}{\begin{varwidth}{6cm}%
    $\mathds{R}^{15}$%
    \end{varwidth}}%
};
\mynextgroupplot{uniform_d20_euclidean}
\legend{}

\draw (rel axis cs:0,1) node[inner sep=0pt, below right=9pt] {%
    \scalebox{.9}{\begin{varwidth}{6cm}%
    $\mathds{R}^{20}$%
    \end{varwidth}}%
};

\draw (O) ++(5pt,-4pt)
    node[below right] (opt) {\scalebox{.9}{\footnotesize Optimum}};
\draw[semithick, black!10, cap=rect]
    (O) |- (opt)
    ;

\draw (R) ++(-5pt,4pt)
    node[above left] (rand) {\scalebox{.9}{\footnotesize Random}};
\draw[semithick, sblue!20, cap=rect]
    (R) |- (rand)
    ;

\mynextgroupplot{nasa_euclidean}
\legend{}

\draw (rel axis cs:0,1) node[inner sep=0pt, below right=9pt] {%
    \scalebox{.9}{\begin{varwidth}{6cm}%
    NASA $\mathds{R}^{20}$\\%
    \textcolor{black!50}{Euclidean}\\
    \textcolor{black!50}{$n=\num{40140}$}
    \end{varwidth}}%
};

\mynextgroupplot{listeria_levenshtein}

\draw (rel axis cs:1,0) node[inner sep=0pt, above left=9pt] {%
    \scalebox{.9}{\begin{varwidth}{6cm}%
    Listeria\\%
    \textcolor{black!50}{Levenshtein}\\
    \textcolor{black!50}{$n=\num{20650}$}
    \end{varwidth}}%
};

\end{groupplot}%
\end{tikzpicture}%
\caption{%
Number of distance computations as a function of $k$, the number of nearest
neighbors covered by the chosen radius used for a range query. The first four
datasets are uniformly random vectors, while the last two are taken from the
SISAP dataset collection~\cite{Figueroa:2007}, with queries withheld. The
listeria string lengths vary from~39 to~6579. The results are the average
over~10 randomly selected queries. The oracle AESA uses elimination power
among remaining points as its heuristic}%
\label{fig:experiments}%
\end{figure}%
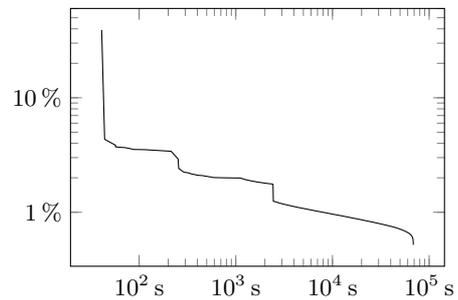
\begin{figure}
\fcapside{%
\caption{%
Bound on relative error (MIP gap) as a function of time, when computing the
optimal number of distance computations in a particularly difficult instance
with $k=2$ over uniformly random vectors in 15-dimensional Euclidean space.
Finding the optimum took over nineteen hours.
After \SI{41}{\second}, the gap was \SI{39.1}{\percent}, but already at
\SI{44}{\second}, it was down to \SI{4.34}{\percent}. Getting to
\SI{1}{\percent} took \SI{2.21}{\hour}
}
\label{fig:gapplot}}{%
\hfill
\begin{tikzpicture}
    \begin{axis}[small, xmode=log, ymode=log,
        height=5cm,
        width=6.5cm,
        ytick = {1,10},
        yticklabels = {\SI{1}{\percent},\SI{10}{\percent}},
        xtick = {100,1000,10000,100000},
        xticklabel style={inner xsep=0pt},
        xticklabels = {
            \SI[parse-numbers=false]{10^2}{\second},
            \SI[parse-numbers=false]{10^3}{\second},
            \SI[parse-numbers=false]{10^4}{\second},
            $\mathclap{\SI[parse-numbers=false]{10^5}{\second}}$,
        },
        ]
    \addplot[mark=none] table [x=secs, y=gap] {gapplot.tsv};
\end{axis}%
\end{tikzpicture}}%
\end{figure}

\section{Metric Search Is Hard, Even If You’re Omniscient}
\label{sec:hardness}

Obviously, a major challenge in choosing the right pivots is that you don't
know what the elimination graph looks like\kern.7pt---you can only make
heuristic guesses. But what if you \emph{did} know? As it turns out, that
wouldn't be the end of your worries.

\Cref{sec:elimination} showed that it is possible to find the optimum by
framing the problem as that of looking for a minimum directed dominating set.
Of course, this is an NP-hard problem, so there's no real surprise in that we
can reduce \emph{to} it. But what about reducing in the other direction? That
is, unless $\mathrm{P}=\mathrm{NP}$, is there any hope of finding some
feasible way of determining the optimum? Alas, no: reducing from the general
minimum undirected dominating set problem to finding the optimum for metric
search is quite straightforward, and the reduction preserves the both the
objective value and the problem size exactly,\footnote{In terms of vertices,
not edges.} meaning that approximation hardness results apply as well.

\begin{theorem}\label{thm:nphard}
    There is a linear-time reduction from the undirected minimum dominating
    set problem on $n$ vertices to the metric range search problem on $n$
    objects, which preserves the objective values of the solutions exactly.
\end{theorem}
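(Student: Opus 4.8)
The plan is to encode an undirected graph $G=(V,E)$ on $V=\{v_1,\dots,v_n\}$ as a metric space whose \emph{elimination graph}, in the sense of \cref{sec:elimination}, is exactly $G$, so that \cref{cor:range} immediately gives an optimal distance count of $\gamma(G)$. Concretely, I would take the data set $X=\{x_1,\dots,x_n\}$, one object $x_i$ per vertex $v_i$, together with a separate query point $q$, and set
\[
\delta(x_i,x_j)=\begin{cases}1&\text{if }\{v_i,v_j\}\in E,\\2&\text{otherwise,}\end{cases}\qquad\delta(q,x_i)=3,
\]
for $i\neq j$, with search radius $r=1$ (any $r\in[1,2)$ would serve).

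First I would check that $\delta$ is a metric. Symmetry and the identity of indiscernibles are immediate, since every distance between distinct points is at least $1$; for the triangle inequality it is enough that any two-hop path has length at least $2$, which covers every pair at distance $1$ or $2$, and that the only pairs at distance $3$ involve $q$, whose distance to \emph{every} data point is $3$, so any detour away from $q$ and back has length at least $3+1=4$.

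Next I would read off the elimination graph. Since $\delta(q,x_i)=3>1=r$, the result set is empty, so to ``resolve'' $x_j$ is to certify $\delta(q,x_j)>r$. Computing $\delta(q,x_i)$ resolves $x_i$ itself, and for $j\neq i$ it gives, via \cref{eq:bounds}, the bounds $\ell(x_j)=|3-\delta(x_i,x_j)|$ and $u(x_j)=3+\delta(x_i,x_j)\geq4$. The upper bound never falls to $r$, whereas $\ell(x_j)>r=1$ holds iff $\delta(x_i,x_j)=1$, i.e.\ iff $\{v_i,v_j\}\in E$. Since lower bounds from several pivots combine as a maximum and eliminations do not interact (\cref{sec:pivopt}), a pivot set $D\subseteq X$ resolves the whole query precisely when every $x_j\notin D$ has a neighbour $x_i\in D$, i.e.\ precisely when $\{v_i:x_i\in D\}$ is a dominating set of $G$. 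So the elimination graph is $G$ (viewed as a symmetric digraph, whose directed domination number is just the ordinary domination number), and by \cref{cor:range} the minimum number of distance computations equals $\gamma(G)$.

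Finally I would note that the instance --- essentially an $(n+1)\times(n+1)$ distance matrix whose entries are copied off the adjacency relation of $G$ --- is produced by a single pass, so the reduction runs in time linear in its output; it uses exactly $n$ data objects and preserves the objective value on the nose. The step that most needs care is the elimination-graph computation: one must rule out that combining several pivots, or pairing a lower bound with an upper bound, could eliminate some non-adjacent $x_j$ --- which is exactly why the three distance values are spaced so that upper bounds are always useless and the maximum of the lower bounds exceeds $r$ only when a single neighbouring pivot already does.
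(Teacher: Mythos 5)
Your construction is essentially the paper's own reduction: encode adjacency in the distances so that the elimination graph for the query is exactly $G$, with values spaced so upper bounds never fire and lower bounds eliminate precisely the neighbours; the paper merely uses distances $1$/$2$ with $\delta(q,x)=2$ and $r<1$ where you use $1$/$2$/$3$ with $r=1$, a cosmetic difference. The argument is correct, including the check that combining pivots cannot eliminate a non-neighbour, and it preserves the objective value on $n$ objects just as claimed.
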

\begin{proof}
    \let\d\delta
    We first consider range search.
    To encode any instance $G=(V,E)$ of the minimum dominating set problem, we
    construct a metric space $(X, \d)$, where $X=V\cup\{q\}$, with $q\not\in
    V$, and design the metric so that the elimination graph corresponds to
    $G$. We define the metric as follows:
    $$
    \d(x, y) =
    \begin{cases}
    0 & \text{if $x = y\,$;}\\
    1 & \text{if $\{x, y\}\in E\,$;}\\
    2 & \text{otherwise.}
    \end{cases}
    $$
    In particular, $\delta(q,x)=2$ for all $x\in V$.
    This definition of $\d$ satisfies all the metric properties. Specifically,
    note that triangularity holds, because for any objects $x, y, z\in X$,
    we have $\d(x,z)\leq 2 \leq \d(x,y)+\d(y,z)$ (assuming $x\neq y\neq z$;
    otherwise triangularity is trivial).

    It should be clear that the elimination graph for $q$ with $r<1$
    corresponds exactly to the original graph $G$.\footnote{Note that only the
    lower bound is relevant, as the upper bound is always greater than the
    search radius.} The closed neighborhood $N[x]$ of $x$ (that is, $x$ and
    the set of objects dominated or eliminated by $x$) is $\{y : \d(q,x) -
    \d(x,y) \geq 1\},$ which corresponds exactly to the cases where
    $\d(x,y)=0$ (that is, $x=y$) and where $\d(x,y)=1$ (that is, $\{x,y\}\in
    E$). In other words, any set of pivots that eliminate the remaining
    objects corresponds to a dominating set in $G$, and vice versa. If we find
    such a pivot set of minimum cardinality, we will have solved the
    undirected minimum dominating set problem. In other words, we have a valid
    reduction from the undirected dominating set problem to metric range
    search. It should also be obvious that the reduction can be performed in
    linear time, and that the size of the optimal solutions are
    identical.\footnote{If the new distance is allowed to use the original
    graph as part of its definition, the reduction can be performed in
    \emph{constant} time---it is merely a reinterpretation.}
\end{proof}

\noindent
The previous reduction can be extended to a polytime reduction to $k$NN search
quite easily, showing NP-hardness (though not necessarily preserving
approximation results). We simply set $k=1$ and add another object $\bar x$ so
that $\delta(q,\bar x) = r < 1$ and $\delta(\bar x,y)=2$ for any other object
$y$. Now the minimum $k$NN radius will automatically be~$r$, which gives us
the same reduction as before.

The reduction in the proof of \cref{thm:nphard} constructs a metric range search
problem on $n$ objects from an undirected dominating set problem on $n$ nodes,
so that \emph{if} (and only if) we can solve the search problem (that is, find
a minimum pivot set), we have also solved the minimum dominating set problem.
Approximation bounds thus carry over from the dominating set problem,
so for any $\epsilon>0$, finding solutions that are within a factor of
$(1-\epsilon)\ln n$ is unfeasible, unless
$\mathrm{NP}\subseteq\mathrm{DTIME}(n^{O(\lg\lg n)})$~\cite{Chlebik:2004}.

\begin{corollary}\label{cor:running}
    For instances of the metric range search problem over $n$ objects where the
    optimal number of distance computations is $\gamma$, the worst-case running
    time of any
    algorithm is $\mathrm{\Omega}(\gamma\log n)$, unless
    $\mathrm{NP}\subseteq\mathrm{DTIME}(n^{O(\lg\lg n)})$.
\end{corollary}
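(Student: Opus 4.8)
The plan is to derive this from \cref{thm:nphard} together with the inapproximability of undirected dominating set recalled above~\cite{Chlebik:2004}, using the trivial fact that the running time of any search algorithm is at least the number of distance computations it performs. Suppose, for contradiction, that some algorithm $A$ resolves metric range queries in worst-case time $o(\gamma\log n)$ on instances over $n$ objects whose optimal distance count is $\gamma$. I would turn $A$ into a polynomial-time approximation algorithm for minimum undirected dominating set as follows: given a graph $G$ on $n$ vertices, build the range search instance of \cref{thm:nphard} (which has optimum $\gamma=\gamma(G)$) in linear time, run $A$ on it, and output the set $P$ of data objects whose distance to $q$ was computed during the run.

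Two things then need to be checked. First, \emph{correctness}: to resolve the query, $A$ must (by \cref{cor:range}) compute distances to a set of pivots $P$ that eliminates, via pivoting, every remaining data object; and in the metric of \cref{thm:nphard} only lower bounds are ever useful, with the lower bound for $x$ exceeding the search radius exactly when some probed object lies in $N[x]$, so $P$ must dominate every vertex of $G$—precisely the equivalence established inside the proof of \cref{thm:nphard}. Second, the \emph{approximation ratio}: $|P|$ equals the number of distance computations $A$ makes, hence $|P|$ is at most $A$'s running time, which is $o(\gamma\log n)$; since $\log n$ and $\ln n$ differ by a constant factor and the hidden constant in $o(\cdot)$ may be taken as small as we like, for large $n$ this is below $\tfrac12\gamma\ln n$, so $P$ is a $(1-\epsilon)\ln n$-approximate dominating set with $\epsilon=\tfrac12$. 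Dealing with the finitely many graphs on fewer than the threshold number of vertices by a hardcoded lookup table ($O(1)$ time), and noting $\gamma\le n$ so that $A$—and hence the whole procedure—runs in polynomial time, we obtain a polynomial-time $(1-\epsilon)\ln n$-approximation for dominating set. By~\cite{Chlebik:2004} this forces $\mathrm{NP}\subseteq\mathrm{DTIME}(n^{O(\lg\lg n)})$, which is the stated exception; hence the running time is $\mathrm{\Omega}(\gamma\log n)$ otherwise.

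The genuinely substantive step is the correctness claim—that the objects $A$ probes necessarily form a dominating set of $G$. Everything else (absorbing the $\log$-versus-$\ln$ constant and the $o(\cdot)$ constant into a concrete factor, the constant-size base case for small $n$, and the polynomiality check) is bookkeeping. I expect the cleanest presentation of the correctness step is to invoke directly the sentence in the proof of \cref{thm:nphard} stating that any pivot set eliminating the remaining objects corresponds to a dominating set in $G$, and conversely, rather than re-deriving it.
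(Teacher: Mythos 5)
Your proposal is correct and follows essentially the same route as the paper's own (very terse) proof: a running time of $o(\gamma\log n)$ bounds the number of distance computations, which via the reduction of \cref{thm:nphard} yields an $o(\log n)$-approximation for undirected dominating set, contradicting the $(1-\epsilon)\ln n$ inapproximability bound of~\cite{Chlebik:2004}. You simply spell out the details (outputting the probed set, correctness via \cref{cor:range} and the equivalence inside \cref{thm:nphard}, constant factors and small cases) that the paper leaves implicit.
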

\begin{proof}
    An algorithm with a (polynomial) running time of $o(\gamma\log n)$ would
    necessarily use $o(\gamma\log n)$ distance computations, yielding an
    approximation algorithm for the dominating set problem with an
    approximation ratio $o(\log n)$.
\end{proof}

\noindent
Note that the worst-case running time \emph{in general} is still $\Omega(n)$,
as we may very well have $\gamma=n$, in degenerate workloads.

\section{Omniscience Is Overrated}
\label{sec:aesa}

In the discussion so far, what has been described is a scenario where all
potential eliminations are known. Even then, as we have seen, it is only
realistically feasible to get to within a log-factor of the optimum. And as it
turns out, achieving this log-factor is possible, even \emph{without} knowing
all the potential eliminations. What is assumed instead is a more limited
oracle that can tell us which of the remaining points has the highest
\emph{elimination power}, that is, the highest out-degree among the remaining
vertices.

The thing is, a minimum dominating set may be approximated to within a
log-factor using a simple greedy strategy---a strategy that most likely cannot
be significantly improved upon; it gets within a factor of $\ln n +
1$, and as discussed in the previous section, we have a lower bound of
$(1-\epsilon)\ln n$ for any $\epsilon>0$.\footnote{%
The \emph{upper} bound is easily shown by reinterpreting the minimum
dominating set problem for a directed graph $G=(V,E)$ as the problem of
covering $V$ with the closed out-neighborhoods of $G$, translating the
standard set covering approximation~\cite{Williamson:2011}.}

What is more, this is exactly the approach taken by the AESA family of
indexing methods:
they greedily pick one point at a time, based on estimated elimination power,
eliminating others as they go (cf.\@ \cref{fig:cmp}).
\begin{figure}[t]
\begin{tabularx}{\linewidth}{@{}p{.5\linewidth}X@{}}
\begin{pseudo}[dim-color=black!35, font=\footnotesize\kwfont]*
\hd{AESA}(q, r; V, \delta) \\
$U \gets V$ \\
$P = \emptyset;\ \textcolor{\pseudodimcolor}{R=\emptyset}$ \\
while $U\neq\emptyset$ \\+
    $p \gets \arg\min_{x\in U} h_P(x)$ \\[dim]
    if $\delta(q,p)\leq r$\,\tn{:} $R\gets R\cup\{p\}$ \\
    $P \gets P\cup\{p\}$ \\
    $U\gets U\setminus\bigl(\{p\}\cup\{x:\ell_P(x)>r\}\bigr)$ \\-[dim]
return $R$
\end{pseudo}
&
\begin{pseudo}[dim-color=black!35, font=\footnotesize\kwfont]*
\hd{Greedy-Dom-Set}(V, E) \\
$U \gets V$ \\
$D \gets \emptyset$ \\
while $U\neq\emptyset$ \\+
    $p \gets \arg\max_{x\in U} |N^+(x)\cap U|$ \\[dim]
    \\
    $D \gets D \cup \{p\}$ \\
    $U \gets U\setminus N^+[p]$ \\-[dim]
return $D$
\end{pseudo}
\end{tabularx}
\caption{Side-by-side comparison of the AESA metric search algorithm and the
greedy approximation for the directed minimum dominating set
problem}\label{fig:cmp}
\end{figure}
In other words, full omniscience wrt.\@ the elimination graph is not needed;
if we can formulate a heuristic returning the most useful next pivot at each
step, the algorithm is already as good as it realistically can be, or at least
very nearly so.

\begin{proposition}
Greedily selecting pivots based on high elimination power is an asymptotically
optimal polytime strategy for minimizing distance computations in metric range
search, unless $\mathrm{NP}\subseteq\mathrm{DTIME}(n^{O(\lg\lg n)})$.
\qed
\end{proposition}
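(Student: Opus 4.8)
The claim bundles two facts: (i) the greedy strategy runs in polynomial time, and (ii) it is asymptotically optimal in the sense that its worst-case distance count is within the best factor a polytime algorithm can hope for, namely $\Theta(\log n)$ away from $\gamma$, assuming $\mathrm{NP}\not\subseteq\mathrm{DTIME}(n^{O(\lg\lg n)})$. The plan is to assemble this from pieces already in hand. First I would recall the reduction of \cref{sec:elimination}: resolving a range query with the fewest distance computations is exactly the directed minimum dominating set problem on the elimination graph $G$, with the objective preserved. So it suffices to analyze greedy pivot selection \emph{as} the greedy algorithm for directed minimum dominating set on $G$, with the vertex set being the $n$ data objects.

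**The upper bound.** The key step is to observe that directed minimum dominating set on $G=(V,E)$ is literally an instance of set cover: the universe is $V$, and for each $v\in V$ we have the set $N^+[v]$ (the closed out-neighborhood). A dominating set is exactly a subcollection of these sets that covers $V$, and the cardinalities match. The classical greedy set cover analysis then gives a solution of size at most $(\ln n + 1)\cdot\gamma(G)$, where $\gamma(G)$ is the optimum; crucially, the greedy choice there — pick the set covering the most still-uncovered elements — is precisely $\arg\max_{x\in U}|N^+(x)\cap U|$, i.e.\ the highest-elimination-power pivot among the remaining objects $U$. This is the content of the footnote already in the text; I would just cite the standard analysis~\cite{Williamson:2011} and note that the AESA loop with an exact elimination-power oracle implements exactly this. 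Polynomial running time is immediate: at most $n$ iterations, each scanning $U$ and the stored distances.

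**Matching the lower bound.** For asymptotic optimality I would invoke \cref{cor:running} (equivalently, the inapproximability consequence of \cref{thm:nphard}): no polytime algorithm can guarantee a distance count below $(1-\epsilon)\ln n\cdot\gamma$ for all instances, under the stated complexity assumption, since the reduction from undirected dominating set preserves both objective value and the number $n$ of vertices/objects. Since greedy achieves $(\ln n + 1)\gamma$ in the worst case and no polytime method can do better than $(1-\epsilon)\ln n\cdot\gamma$, greedy is within a $(1+o(1))$ factor of the best achievable worst-case ratio — i.e.\ asymptotically optimal among polytime strategies.

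**Main obstacle.** The genuine subtlety is the phrase ``based on high elimination power'': the algorithm only has access to an oracle for relative elimination power among the \emph{surviving} objects, not to the full elimination graph, yet we are comparing against $\gamma(G)$, which is defined in terms of that graph. The point to make carefully is that the greedy set-cover bound needs nothing more than the ability to pick a maximum-residual-coverage set at each step — it never references the optimum or the global structure — so the limited oracle is exactly sufficient, and the bound holds against the true $\gamma$. Beyond that, everything is bookkeeping: confirming that the AESA elimination step $U\gets U\setminus(\{p\}\cup\{x:\ell_P(x)>r\})$ removes precisely $N^+[p]\cap U$ in the elimination-graph correspondence (using that eliminations are non-interacting, per \cref{sec:pivopt}), and that upper-bound eliminations, if included, only help.
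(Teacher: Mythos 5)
Your proposal is correct and follows essentially the same route as the paper: it combines the objective-preserving correspondence with directed dominating sets, the greedy set-cover bound of $\ln n + 1$ via closed out-neighborhoods (the paper's own footnote, citing the same source), and the $(1-\epsilon)\ln n$ inapproximability inherited through \cref{thm:nphard} and \cref{cor:running}. Your explicit check that the limited ``remaining-points'' oracle suffices for the greedy analysis is a welcome clarification of a point the paper only treats informally, but it is not a departure from the paper's argument.
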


\noindent
To say that AESA picks pivots based on elimination power may be overstating
it, however. Rather, Vidal Ruiz talks about ``successive
approximation to nearest points''~\cite{Vidal:1986}, while Figueroa et al.\@
state that their goal is ``to define an order such that the first element is
very close to the query,'' because ``[t]he closer the pivot to the query $q$,
the more effective the pruning is''~\cite{Figueroa:2010}.
Of course, all manner of regression and learning methods might be used with
the specific goal of estimating which points are close to the
query~\cite{Edsberg:2010,Mao:2011}, or which are likely to be part of the
search result~\cite{Murakami:2013}.

There has been work on pivot selection focusing directly on elimination
power~\cite{Bustos:2003}, but this does not seem to have been central in
AESA-like methods, using a full distance matrix. One selection method, which
maximizes the lower bound used for elimination, and skips over pivots that
don't contribute, has been explored in the fixed, initial pivot list of
PiAESA~\cite{Socorro:2011}, but the second phase, where pivots are selected
dynamically, still follows the heuristic of selecting those that seem close to
the query.

Following the analogy with the greedy approximation for the directed
dominating set problem, there are two modifications one might make. The first
is to look for high elimination power in the data set overall, rather than
closeness to the query. For example, it is quite possible that a pivot that is
far away might be able to eliminate an entire nearby cluster. The second
modification, which I will briefly explore, is to modify the selection based
on redundancy, i.e., how much of a point's elimination power actually applies
to \emph{remaining} points. If one selects pivots that are as similar to the
query as possible, they are bound to be similar to each other as well; and
even if a pivot is able to eliminate many other points, that is of little use
if those points have already been discarded.

A simple version of this second modification is the following: rather than
merely minimizing the sum of lower bounds, as in AESA, we divide this by the
sum of distances to remaining points. This will not only prefer pivots that
seem to be close to $q$, but those that seem close to $q$ \emph{relative to}
how far they are from the remaining points, meaning they ought to be able to
eliminate more of them. Some preliminary results on the performance of this
\emph{greedy AESA} (gAESA) are shown in \cref{fig:experiments}. As can be
seen, it does seem to perform on par with AESA and iAESA2, at times
outperforming both. Given the rather arbitrary nature of the heuristic, better
variants might very well exist.

\section{Concluding Remarks and Future Work}

The previous sections have established an equivalence between the minimal
number of distance computations needed to resolve an exact metric range
query, on the one hand, and the size of a minimum dominating set in a directed
graph on the other.\footnote{That is, for any range search instance, there is
a directed graph with the objects as its nodes for which the equivalence holds.
Reducing in the other direction preserves the objective value, but not
necessarily the number of nodes/objects.}
The result also applies to uniquely determined $k$NN
queries, if upper bounds are ignored. One might object that the scenario is
too limited---that in practice, one would be contented with an
\emph{approximate} or \emph{probabilistic} search. In fact, the results do
also apply for certain approximations, such as those that merely modify the
query, resulting in a new, simpler exact search~\cite{Naidan:2013}. But
beyond this, the main uses of these results are precisely in establishing the
limits of exact search for given workloads; if one can show that any exact
algorithm must examine an excessively large portion of the data set, that is a
forceful argument in favor of approximation or randomization. What is
presented here only scratches the surface, however. What follows is a sketch
of possible directions for future research based on the established
equivalence.

\textit{Heuristic development.}
The gAESA heuristic is somewhat arbitrary. While it picks pivots that seem
close to the query, relative to the remaining points, the \emph{goal} is to
pick the pivot with the highest elimination power. There may be many ways of
estimating this more directly, either using hand-crafted heuristics (e.g.,
including pivots that are far away from the query compared to remaining
points) or machine learning (which has so far been focused on distance or
relevance).

\textit{Algorithm development.}
In the interest of constructing better baselines, one might take the
development further. Rather than going with the AESA approach, one might
attempt to solve the dominating set problem without actually knowing the
graph. This would be different from the more common forms of online dominating
set problems~\cite{Boyar:2019}, where vertices are provided in some arbitrary
order. Rather, this would presumably involve link prediction~\cite{Lu:2011},
at each step selecting a pivot deemed likely to be included in the optimal
solution or to provide good support for future predictions.

\textit{Problem variants.}
The dominating set problem provides a new perspective on the problem of metric
search, and variants of the former might find analogies for the latter. For
example, the \emph{weighted} dominating set problem can also be approximated
greedily, and the analogous metric search method would be a weighted AESA,
where selection is based on the ratio of weight to elimination power. The
weight could, for example, represent the actual cost of computing the
query--pivot distance, which is the effort that is being minimized, after all.
For many distances, this cost is identical for all points, but for, e.g., the
signature quadratic form distance~\cite{Beecks:2010}, it may vary wildly.

One might also look for analogies in the other direction. For example,
probabilistic methods (such as probabilistic iAESA~\cite{Figueroa:2010}) do
not aim to eliminate all vertices; in these cases, one could instead consider
\emph{partial} domination~\cite{Das:2019}.

\textit{Probabilistic analysis.}
There is a substantial literature on the topic of random graphs.
For example, it is known that for random digraphs whose edges are independent
Bernoulli variables with probability $p$,\footnote{Ch\'avez et al.\@ say that
such independence is a ``reasonable approximation''~\cite{Chavez:2001}.} the
domination number is logarithmic, with base~$1/(1-p)$~\cite{Lee:1998}. In
fact, it is not hard to modify the results of Telelis and
Zissimopoulos~\cite{Telelis:2005} to show that in this scenario, even AESA
selecting pivots \emph{arbitrarily} would yield a logarithmic number of
pivots, staying within a doubly logarithmic additive term of the optimum,
results that match those of Navarro~\cite{Navarro:2009}.

\textit{\kern-1.75ptWorkload descriptions.}
Beyond finding $\gamma$, the dominating set perspective may inspire other
hardness measures and workload descriptions. For example, the greedy
approximation is, more precisely, logarithmic in the \emph{maximum degree}
$\Delta(G)$, a value that could be used as an indicator of \emph{how hard it
is to get close to the optimum}. And although the independence assumption on
elimination may be too strong, one could still use the elimination probability
$p$, perhaps estimated by averaging over several queries, as an indication of
general workload hardness.

\subsubsection*{Acknowledgements.}

The author would like to thank Ole Edsberg, both for discussions
providing the initial idea for this paper, and for substantial later input.
He would also like to thank Jon Marius Venstad and Bilegsaikhan Naidan for
reading early drafts of the paper and providing feedback.

\def\doi#1{\href{http://doi.org/#1}{\nolinkurl{doi:#1}}}
\bibliography{paper}

\end{document}